\definecolor{codegreen}{rgb}{0,0.6,0}
\definecolor{backcolour}{rgb}{0.95,0.95,0.92}
\lstdefinestyle{mystyle}{
	backgroundcolor=\color{backcolour},   
	commentstyle=\color{codegreen},
	keywordstyle=\color{magenta},
	numberstyle=\tiny\color{codegray},
	stringstyle=\color{codepurple},
	basicstyle=\ttfamily\footnotesize,
	breakatwhitespace=false,         
	breaklines=true,                 
	captionpos=b,                    
	keepspaces=true,                 
	numbers=left,                    
	numbersep=5pt,                  
	showspaces=false,                
	showstringspaces=false,
	showtabs=false,                  
	tabsize=2
}
\newtheorem{cor}{Corollary}
\newtheorem{thm}{Theorem}
\theoremstyle{definition}
\newtheorem{dfn}{Definition}
\def\ub#1{
	\leavevmode\hbox{%
		\setbox0\hbox{$#1$}\dp0 0pt
		\vrule height.5ex width.4pt depth.33333ex \kern-.4pt
		\vtop{\hbox{\kern.15em \box0\kern.15em}\kern.33333ex \hrule}%
		\kern-.4pt \vrule height.5ex width.4pt depth.33333ex \kern-.4pt
	}%
}
\newcommand\bloc[2][]{\underset{#1}{\ub{#2}}\ }
\newcommand\N{\mathbb{N}}
\newcommand\OK{\mathbb{O}}
\title{Self-descriptive Sequences directed by two Periodic Sequences}
\author{
Shigeki Akiyama
\institute{Institute of Mathematics, University of Tsukuba \\1-1-1 Tennodai, Tsukuba, Ibaraki, 305-8571 Japan}
\email{akiyama@math.tsukuba.ac.jp}
\and
Damien Jamet 
\institute{Univ. Lorraine, Loria, UMR 7503\\Vand{\oe}uvre-l{\`e}s-Nancy, F-54506, France}
\email{damien.jamet@loria.fr}
\and
Ir\`ene Marcovici 
\institute{Univ Rouen Normandie, CNRS,\\ Normandie Univ LMRS UMR 6085,\\ F-76000 Rouen, France}
\email{irene.marcovici@univ-rouen.fr}
\and
Mai-Linh Trân Công 
\institute{\'Ecole Normale Supérieure de Lyon\\ 15 parvis René Descartes, F-69342 Lyon, France}
\email{mai-linh.tran\_cong@ens-lyon.fr}
}
\begin{document}
\maketitle

\section{Introduction}
A self-descriptive sequence $(u_n)_{n \in \mathbb{N}}$ is an infinite concatenation of finite powers of a letter (usually called runs) $(w_n)_{n \in \mathbb{N}}$ such that $|w_n| = u_n$ where $|x|$ denotes the length of the finite word $x$. The best known self-descriptive sequence is certainly the Oldenburger word $\OK_{1,2} = (k_n)_{n \in \N}$ \cite{RO1939,WK1966} defined by $u_0 = 1$, $w_{2n} = 1^{u_{2n}}$ and $w_{2n+1} = 2^{u_{2n+1}}$ for all $n \in \mathbb{N}$. Until recently, the Oldenburger word was still called the Kolakoski word in reference to \cite{WK1966}, but actually, it first appeared in \cite{RO1939}.

The Oldenburger word is a special case of a self-descriptive sequence. Indeed, the run $w_n$ is entirely determined by knowledge of its index $n$: its length is equal to $u_n$ and its single letter is determined by the parity of $n$.

In \cite{BJM23}, the authors focus on a larger family of self-descriptive sequences where the $w_n$'s are determined not only by their index $n$ but also by another sequence, namely the directing sequence of $u$. In practice, given a sequence $t = (t_n)_{n \in \mathbb{N}}$ on the alphabet $\mathcal{A} \in \{1,2,\dots\}$, the sequence directed by $t$ is the sequence $u$ defined by: $u = {t_0}^{u_0} {t_1}^{u_1} \cdots {t_n}^{u_n} \cdots$\, For example, the Oldenburger word $\OK_{1,2} = 1^{u_0}2^{u_1}1^{u_2}2^{u_4}\cdots$ is directed by the sequence $t = (12)^\omega$. 

One of the most fascinating questions about the sequence $\OK_{1,2}$ concerns the existence and possible value of the frequencies of occurrences of each of its letters \cite{Keane91}: \textit{Do the letters $1$ and $2$ have frequencies of occurrences $f_1$ and $f_2$ in $\OK_{1,2}$? If so, does $f_1=f_2=\frac{1}{2}$?}\, Recall that the frequency of occurrences of the letter $a$ in the sequence $u$ is the limit, when $n$ tends to $+\infty$, of the average number of $a$ in the prefix $u_0 \cdots u_{n-1}$ of $u$.

The notion of self-descriptive sequence is related to that of differentiable word and smooth word \cite{BBC05,BDLV06}. A sequence over $\mathcal{A}$ is differentiable over $\mathcal{A}$ if it is the infinite concatenation of runs whose lengths have values in $\mathcal{A}$. More precisely, a sequence $(u_n)_{n \in \mathbb{N}}$ over a finite alphabet $\mathcal{A} \subset \mathbb{N}$ is differentiable if there exist two sequences $(x_n)_{n \in \mathbb{N}}$ and $(\alpha_n)_{n \in \mathbb{N}}$ over $\mathcal{A}$, such that $u = x_0^{\alpha_0} x_1^{\alpha_1}x_2^{\alpha_2}\cdots$ with $x_n \neq x_{n+1}$ and $\alpha_n \neq 0$ for all $n \in \mathbb{N}$. The sequence $(\alpha_n)_{n \in \mathbb{N}}$ is the derivative sequence of $u$. Finally, the sequence $u$ is smooth if it is infinitely differentiable.

The sequence $\OK_{1,2}$ is a fixed point for differentiation. It is self-descriptive, differentiable and smooth over $\{1,2\}$. As with the sequence $\OK_{1,2}$, the question of the existence of frequencies of occurrences and their values in smooth words on the alphabet $\{1,2\}$ is still open. The first significant result on this question is due to V. Chv{\'a}tal \cite{chvatal93}: for $n$ large enough,
\begin{equation*}
0.49916 \leq \dfrac{|k_0 \cdots k_{n-1}|_1}{ n} \leq 0.50084,
\end{equation*} 
where $|w|_1$ denotes the number of occurrences of the letter $1$ in $w$.
These bounds have been slightly improved by M. Rao \cite{Rao12} using a method quite similar to Chv{\'a}tal's but with greater computing power:
\begin{equation*}
0.49992 \leq \dfrac{|k_0 \cdots k_{n-1}|_1}{ n} \leq 0.50008.
\end{equation*} 
This is the best currently known bound for $\OK_{1,2}$ and for the set of smooth words on the alphabet $\{1,2\}$. 
In other words, neither the existence nor the values of the frequencies of occurrences are known for any smooth word on the $\{1,2\}$ alphabet. On the other hand, over the $\{a,b\}$ alphabets where $a$ and $b$ have the same parity, it is possible to determine the frequencies of certain smooth words. This is the case at least for the Oldenburger word $\OK_{1,3}$ (resp. $\OK_{3,1}$) defined on the alphabet $\{1,3\}$ directed by $(13)^\omega$ (resp. by $(31)^\omega$) \cite{Sing04} and for the extreme smooth words (in the sense of lexicographic order) \cite{BJP08}.

Since the work of V. Chv{\'a}tal \cite{chvatal93} and M. Rao \cite{Rao12}, it is reasonable to expect that the frequencies of occurrences of each letter in $\OK_{1,2}$ are equal to $\frac{1}{2}$. In other words, it is reasonable to assume that the frequencies of occurrence in $\OK_{1,2}$ and in its directing sequence are identical.

As far as we know, none of the works on smooth words or on the Oldenburger word has shown the existence or the non-existence of frequencies of occurrences in a non-trivial deterministic self-descriptive sequence over the alphabet $\{1,2\}$. 

Similarly, none of these works has proved the existence of a non-trivial deterministic self-descriptive sequence that shares (resp. does not share) its frequencies of occurrences with its directing sequence.

In the present work, we exhibit a class of self-descriptive sequences that can be explicitly computed and whose frequencies are known. In particular, as a corollary of our main result, we prove that the sequence introduced in \cite{BJM23} has the expected frequencies of occurrences.

\section{Definitions and basic notions}
Let $\mathcal{A}$ be a finite alphabet. The set of finite words over $\mathcal{A}$ is denoted by $\mathcal{A}^\star$. If $w = w_0 \cdots w_k \in \mathcal{A}^\star$ is a finite word over the alphabet $\mathcal{A}$ with $w_i \in \mathcal{A}$ for $i=0,1,\cdots,k$. Let $|w|$ stand for the \textbf{length} of $w$, that is the number of letters occurring in $w$. If $w = w_0 \cdots w_k$, then $|w|=k+1$. In particular, $|\varepsilon|=0$. Let $w \in \mathcal{A}^\star$ and let $a \in \mathcal{A}$. We set $|w|_a = \# \left\{i \in\{ 0,1,\dots, |w|-1\} \, |\ w_i = a\right\}$.

\begin{dfn}[Self-descriptive sequence\label{def::def1}]
	Let $\mathcal{A} \subset \mathbb{N}^\star$ be a finite alphabet. The infinite sequence $u = (u_n)_{n \in \N} \in \mathcal{A}^\mathbb{N}$ is said to be \textbf{self-descriptive} if there exists a sequence $\delta=(\delta_n)_{n \in \mathbb{N}}$ over $\mathcal{A}$ such that
	\begin{equation}\label{eq::auto}
	u = \delta_0^{u_0} \delta_1^{u_1}  \delta_2^{u_2}  \cdots \delta_n^{u_n}  \cdots  
	\end{equation}
	The sequence $\delta$ is called the \textbf{directing sequence} of $u$ and one says that $u$ is directed by $x$.	
\end{dfn}

In other words, the sequence $u$ is self-descriptive if it is the concatenation of runs of size $u_0$, $u_1$, $u_2$, \ldots\, respectively. Note that in the definition of self-descriptive sequences, unlike that of differentiable or smooth words, it is not necessary that $x_n \neq x_{n+1}$. Furthermore, if $0 \notin \mathcal{A}$, then the sequence $(x_n)_{n \in \mathbb{N}}$ is entirely determined by $u$. In other words, there exists a canonical bijection between sequences over $\mathcal{A}$ and self-descriptive sequences over $\mathcal{A}$.

Let $u = \delta_0^{u_0} \delta_1^{u_1}  \delta_2^{u_2}  \cdots \delta_n^{u_n} \cdots $ be a self-descriptive sequence over the alphabet $\{1,2\}$. Let $(m_k)_{k \in \mathbb{N}}$ (resp. $(n_k)_{k \in \mathbb{N}}$) be the increasing sequence over $\mathbb{N}$ such that $u_{i} = 1$ (resp. $u_i=2$) if and only if there exists $k \in \mathbb{N}$ such that $i=m_k$ (resp. $i=n_k$). In other words, $(m_k)_{k \in \mathbb{N}}$ (resp. $(n_k)_{k \in \mathbb{N}}$) is exactly the ordered sequences of the indices where $u$ is equal to $1$ (resp. $2$). Let $T_1 =\left(\delta_{m_k}\right)_{k \in \{1,2\}^\mathbb{N}}$ and $T_2 =\left(\delta_{n_k}\right)_{k \in \{1,2\}^\mathbb{N}}$. 

In the present work, since we are only interested in frequencies of letters, we assume, without loss of generality, that $u_0=u_1=2$. The sequence $u$ and its directing sequence $\delta$ are then computable from $T_1$ and $T_2$ as follows: 
\begin{center}
	\begin{minipage}{0.625\linewidth}	
		\begin{lstlisting}[caption={\texttt{Python} function computing $u$ and $\delta$ from $T_1$ and $T_2$.},language=Python, label={lst::liste1}]
def OK(T1, T2):
	u = [2,2]
	delta = [2]
	k = 1
	while len(T1) > 0 and len(T2) > 0:
		if u[k] == 1:
			c = T1.pop(0)
			u += [c]
		else:
			c = T2.pop(0)
			u += [c] * u[k]
		delta += [l]
		k += 1
	return u,delta
\end{lstlisting}
	\end{minipage}
\end{center}
One then says that $u$ is also directed by sequences $T_1$ and $T_2$. For instance, if $T_1 = \textcolor{orange}{121\cdots}$ and $T_2 = \textcolor{codegreen}{12\cdots}$, then 
\begin{eqnarray}
u & = & \bloc[2]{22}  
\cdot \bloc[2]{\textcolor{codegreen}{11}}
\cdot \bloc[1]{\textcolor{orange}{1}}
\bloc[1]{\textcolor{orange}{2}}
\cdot \bloc[1]{\textcolor{orange}{1}}
\bloc[2]{\textcolor{codegreen}{22}} \cdots \\
&	=	&	2^2 \cdot \textcolor{codegreen}{1}^2 \cdot \textcolor{orange}{1}^1 \textcolor{orange}{2}^1 \cdot \textcolor{orange}{1}^1  \textcolor{codegreen}{2}^2 \cdots \\
& = & \bloc[2]{22}\label{eq::eq1}  
\cdot \bloc[w_0]{\textcolor{codegreen}{11}}
\cdot \bloc[w_1]{\textcolor{orange}{1}\textcolor{orange}{2}}
\cdot \bloc[w_2]{\textcolor{orange}{1}\textcolor{codegreen}{22}} \cdots =	22 \hspace{0.18cm} \cdot \hspace{0.05cm} w_0 \hspace{0.105cm} \cdot \hspace{0.1cm} w_1 \hspace{0.2125cm} \cdot \hspace{0.15cm} w_2 \hspace{0.3cm} \cdots 		
\end{eqnarray}
with $w_i \in \{a,b,c,d\}^*$, for $i \in \N$, $a= \textcolor{orange}{1}$, $b= \textcolor{orange}{2}$, $c= \textcolor{codegreen}{1}$ and $d= \textcolor{codegreen}{1}$.

\section{Main result}
The main result of the present work is:
\begin{thm}[Main result\label{thm::thmMain}]
	Let $u \in \{1,2\}^\mathbb{N}$ be a sequence over $\{1,2\}$ directed by two periodic sequences $T_1 = (x_1)^\omega \in \{1,2\}^\mathbb{N}$ and $T_2 = (x_2)^\omega \in \{1,2\}^\mathbb{N}$, with $x_1,x_2 \in \{1,2\}^*$. Let $p_1 = \dfrac{|x_1|_1}{|x_1|}$ and $q_2 = \dfrac{|x_2|_2}{|x_2|}$. One has
	$$f_1:= \lim_{n \to \infty} \dfrac{|u_0 \cdots u_{n-1}|_1}{n} = \frac{(1 - q_2)(p_1 + 2q_2 +\sqrt{\Delta})}{2 + \sqrt{\Delta} - p_1}$$
	with $\Delta = (p_1 + 2q_2)^2 - 8(p_1 + q_2 - 1)$.
	
	If  $\delta = (\delta_n)_{n \in \mathbb{N}} \in \{1,2\}$ is directing $u$, then
	$$\lim_{n \to \infty} \dfrac{|\delta_0 \cdots \delta_{n-1}|_1}{n} = p_1 f_1 + p_2 (1-f_1).$$
\end{thm}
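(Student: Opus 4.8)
The central object is a self-descriptive sequence over $\{1,2\}$ whose runs are produced by consuming periodic directing words $T_1 = (x_1)^\omega$ and $T_2 = (x_2)^\omega$. The natural strategy is a renewal/substitution argument: think of $u$ as an infinite concatenation of "super-blocks" $w_0 w_1 w_2 \cdots$, where $w_i$ is the word emitted at step $i$ of the generating procedure (Program~\ref{lst::liste1}). When $u_k = 1$ the step reads one letter $c$ from $T_1$ and outputs $w = c$ (length $1$); when $u_k = 2$ the step reads one letter $c$ from $T_2$ and outputs $w = cc$ (length $2$). So each block is one of four types $a = \textcolor{orange}{1}$, $b = \textcolor{orange}{2}$, $c = \textcolor{codegreen}{11}$, $d = \textcolor{codegreen}{22}$ as in~\eqref{eq::eq1}, and the type is dictated jointly by the current letter $u_k$ of $u$ and the next unconsumed letter of the relevant periodic word.

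First I would set up the bookkeeping. Let $N$ be the number of generation steps performed, among which $A$ steps are of "type $1$" ($u_k=1$, consume from $T_1$) and $B$ steps of "type $2$" ($u_k=2$, consume from $T_2$), so $A+B=N$. The total length produced is $A + 2B$. Now I use self-descriptiveness: the letters $u_0 u_1 \cdots u_{N-1}$ that \emph{drive} these $N$ steps are exactly the prefix of $u$ of length $N$, so $A = |u_0\cdots u_{N-1}|_1$ and hence $A/N \to f_1$ (assuming the limit exists — see below). On the other hand, the $A$ letters read from $T_1$ are a prefix of $(x_1)^\omega$, so by periodicity a fraction $p_1$ of them equal $1$; these letters, each forming a length‑$1$ block, contribute $p_1 A$ ones to $u$ (from type‑$1$ steps), while the $B$ letters read from $T_2$ form length‑$2$ blocks and a fraction $1-q_2 = p_2$ of them equal $1$, contributing $2 p_2 B$ ones. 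Therefore, counting ones in the prefix of $u$ of length $L = A+2B$,
\begin{equation*}
|u_0 \cdots u_{L-1}|_1 \sim p_1 A + 2 p_2 B, \qquad L = A + 2B.
\end{equation*}
Dividing by $L$ and passing to the limit, with $\alpha := \lim A/N$, gives a first relation $f_1 = (p_1 \alpha + 2 p_2(1-\alpha))/(\alpha + 2(1-\alpha))$. But self-descriptiveness also forces $\alpha = f_1$: the driving prefix $u_0\cdots u_{N-1}$ has a fraction $f_1$ of ones. Substituting $\alpha = f_1$ yields a single algebraic (in fact quadratic) equation for $f_1$, namely $f_1(2 - f_1) = p_1 f_1 + 2p_2(1-f_1)$, i.e. $f_1^2 - (p_1 + 2q_2)f_1 + 2(1 - q_2) \cdot$ (something) — after clearing and using $p_2 = 1-q_2$ one gets $f_1^2 - (p_1+2q_2) f_1 + 2(p_1+q_2-1)=0$ wait, let me not commit to the exact constant here; solving the quadratic and selecting the root in $[0,1]$ (the one with the sign of $\sqrt\Delta$ that keeps $f_1$ admissible) produces the stated closed form with $\Delta = (p_1+2q_2)^2 - 8(p_1+q_2-1)$. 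The second claim then follows immediately: among the $N$ directing letters $\delta_0\cdots\delta_{N-1}$, the ones produced at type‑$1$ steps are a fraction $f_1$ of all steps and have density $p_1$ of ones, and those at type‑$2$ steps are a fraction $1-f_1$ with density $p_2$, whence $\lim |\delta_0\cdots\delta_{N-1}|_1/N = p_1 f_1 + p_2(1-f_1)$.

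The main obstacle is \textbf{the existence of the limit} $f_1$; everything above is a consistency computation that \emph{identifies} the value once we know the average of ones in prefixes converges. To handle this rigorously I would replace the heuristic "$\sim$" by explicit error bounds: the periodicity of $T_1, T_2$ gives $O(1)$ errors (bounded by $|x_1| + |x_2|$) in the counts of ones among consumed letters, and the only genuine issue is relating a prefix of $u$ of a given length $L$ to the number of \emph{completed} blocks. Since blocks have length $1$ or $2$, truncating mid-block costs at most $1$ letter, so the discrepancy between $|u_0\cdots u_{L-1}|_1$ and $p_1 A(L) + 2p_2 B(L)$ is $O(|x_1|+|x_2|)$ uniformly. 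The real work is showing $A(N)/N$ converges: I would argue that the generation process is self-similar — the sequence of block-types is itself the image of $u$ under a fixed $4$-letter coding, and $u$ restricted to its type-$1$ positions is governed by $T_1$, to its type-$2$ positions by $T_2$ — so one can bootstrap a subadditivity or contraction estimate, or alternatively exhibit an explicit eventually-periodic (or substitutive) structure of $u$ forced by the periodicity of $T_1$ and $T_2$, from which convergence of all letter frequencies follows. I expect the cleanest route is to show directly that $u$ itself is the fixed point of a primitive substitution determined by $x_1, x_2$, compute its incidence matrix, and read off $f_1$ as a Perron eigenvector coordinate — this simultaneously proves existence and recovers the quadratic, since the characteristic polynomial of the associated $2\times 2$ reduction is exactly $f_1^2 - (p_1+2q_2)f_1 + 2(p_1+q_2-1) = 0$ up to normalization.
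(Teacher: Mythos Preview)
Your consistency computation is correct and gives the right quadratic: setting $\alpha=f_1$ in
\[
f_1=\frac{p_1\alpha+2(1-q_2)(1-\alpha)}{2-\alpha}
\]
yields $f_1^2-(4-p_1-2q_2)f_1+2(1-q_2)=0$, whose discriminant is exactly $\Delta$ and whose smaller root is the stated value. (Be careful with notation: in the paper $p_2:=1-p_1$, not $1-q_2$; the density of $1$'s in $T_2$ is $q_1=1-q_2$.) The formula for the directing sequence then follows from your renewal count. So as a way of \emph{identifying} the value, your route is more direct than the paper's, which instead reads $f_1$ off the Perron eigenvector of a $4\times4$ matrix.

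The gap is existence, and your suggestions do not close it. The claim that ``$u$ itself is the fixed point of a primitive substitution determined by $x_1,x_2$'' is not correct on the four-letter alphabet $\{a,b,c,d\}$: the letter produced at step $k$ depends on the current \emph{phase} inside $x_1$ (resp.\ $x_2$), which is a global quantity, so the map $w_n\mapsto w_{n+1}$ is not a genuine morphism (one could enlarge the alphabet by the phases, but then the matrix is no longer the $2\times2$ you wrote, and primitivity is not automatic). The paper confronts exactly this point. It writes $u=22\,w_0w_1w_2\cdots$ with $w_{n+1}$ obtained from $w_n$ by the self-reading rule, sets $v_n=(|w_n|_a,|w_n|_b,|w_n|_c,|w_n|_d)^t$, and shows
\[
v_{n+1}=A\,v_n+e_n,\qquad
A=\begin{pmatrix}p_1&0&p_1&0\\ p_2&0&p_2&0\\ 0&2q_1&0&2q_1\\ 0&2q_2&0&2q_2\end{pmatrix},
\]
where the error $e_n$ is \emph{bounded} (by the period lengths). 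Since $A$ is primitive with Perron eigenvalue $\alpha_1>1$, the $|w_n|$ grow geometrically and the bounded perturbation becomes negligible; the paper then uses a specific sliding-window cutting (its Algorithm~1, which for each $n$ selects a left index $\ell_n$ so that $|w_0\cdots w_{\ell_n}|=o(|g_n|)$ while $|g_n|\to\infty$) to transfer the frequency information from the blocks $w_n$ to arbitrary prefixes of $u$. This bounded-error Perron--Frobenius step plus the cutting device is the substantive content you are missing. Your ``$2\times2$ reduction'' is in fact the rank-$2$ quotient of this same $A$ (columns $1,3$ coincide, as do columns $2,4$), whose characteristic polynomial is $\lambda^2-(p_1+2q_2)\lambda+2(p_1+q_2-1)$; this explains why your quadratic matches $\Delta$, but it does not by itself give convergence of $A(N)/N$.
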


\begin{proof}[Sketch of proof]
	Let us recode $T_1$, $T_2$ and $u$ over $\{a,b,c,d\}$ as follows: rewrite $T_1$ (resp. $T_2$) as the image of $T_1$ (resp. $T_2$) by the morphism $1 \mapsto a$, $2 \mapsto b$ (resp. $1 \mapsto c$, $2 \mapsto d$).
	In $u$, let us substitute $a$ (resp. $b$) for isolated $1$'s (resp. isolated $2$'s) and $cc$ (resp. $dd$ ) for double $1$'s (resp. double $2$'s).
	
	\begin{enumerate}
		\item Let $u = (u_n)_{n \in \N} = 22 \cdot w_0 \cdot w_1 \cdot \cdots \cdot w_n \cdot \cdots$, where $w_{n+1}$ is the image of $w_n$ by the recoding rule along $T_1$ and $T_2$ (see (\ref{eq::eq1}) for an example).
		\item Let  Let $p_2=1-p_1$ and $q_1=1-q_2$.	Let 
		$$	A = 
		\begin{pmatrix}
		p_1 & 0 & p_1 & 0 \\
		p_2 & 0 & p_2 & 0 \\
		0 & 2q_1 & 0 & 2q_1 \\
		0 & 2q_2 & 0 & 2q_2
		\end{pmatrix} \text{ and } v_n = 
		\begin{pmatrix}
		|w_n|_a \\ |w_n|_b \\ |w_n|_c \\ |w_n|_d 
		\end{pmatrix} \qquad \implies v_{n+1} = A \cdot v_n + e_n,$$ 
		where $e_n$ is an "error" vector and is \underline{bounded}.
		\item $A$ is primitive with exactly two eigenvalues $0 < |\alpha_2| \leq 1 < \alpha_1$. By the Perron-Frobenius theorem, there exists a right (resp. left) eigenvector vectors $\mathbf{r}$ (resp. $\mathbf{\ell}$) of $A$ such that:
		$\mathbf{\ell} \mathbf{r}  = 1$ and $\lim_{n \to \infty}{\alpha^{-n}_1 A^n} =  \mathbf{r} \cdot {~}^t \mathbf{\ell}$.
		\item One cuts the sequence $u$ into words $(g_n)_{n \in \mathbb{N}}$ following Algorithm \ref{algo::algo1} and shows:
		\begin{itemize}
			\item$|w_0 \cdots w_{\ell_n}| = o(|g_n|)$ and $\displaystyle \lim_{n \to \infty}|g_n| = + \infty$
			\item $\dfrac{|u_0 \cdots u_n|_a}{|u_0 \cdots u_n|}  =  \dfrac{|w_0 \cdots w_{l_n -1}|_a + |g_n|_a}{|w_0 \cdots w_{l_n -1}| + |g_n|} =	\dfrac{ \frac{|g_n|_a}{|g_n|} + o(1)}{1 + o(1)} \underset{n \to \infty}{\longrightarrow} r_0$
		\end{itemize}
		\begin{algorithm}[h]
			\DontPrintSemicolon
			\KwIn{$u = w_0 w_1 \cdots w_n \cdots = u_0 u_1 \cdots $}
			$\ell_0 \gets 0$ \tcp*{initial left index}		
			\For{each $n \in \mathbb{N}$}{
				$g_n \gets u_{\ell_n} \cdots u_n$ \tcp*{$|g_n| = (n+1)-|w_0 \cdots w_{\ell_n-1}|$}
				\uIf{$ |g_n| +1 > |w_0 \cdots w_{\ell_n}|^2$ }{
					$\ell_{n+1} \gets \ell_n + 1$ \tcp*{increment left index}
				}
				\Else{
					$\ell_{n+1} \gets \ell_n$ \tcp*{keep left index}
				}
			}
			\caption{Cutting the sequence $\mathcal{O}$ into $(g_n)_{n \in \mathbb{N}}$\label{algo::algo1}}
		\end{algorithm}
	\end{enumerate}
\end{proof}

As a direct consequence of Theorem \ref{thm::thmMain}, we prove the existence of the frequencies in the sequence introduced in Section 5 of \cite{BJM23}:
\begin{dfn}[BJM sequence \cite{BJM23}]
	Let $U = (u_n)_{n \in \N}$ be the self-descriptive sequence $U = x_0^{u_0} x_1^{u_1}x_2^{u_2}\cdots$ defined by $x_0=u_0=u_1=2$, and for all $n \in \mathbb{N}^\star$:
	\begin{enumerate}[i)]
		\item if $u_n=1$, then $x_n=1$ (resp. $x_n=2$) if $|u_0 \cdots u_n|_1$ is odd (resp. even),
		\item if $u_n=2$ then $x_n=1$.
	\end{enumerate}
	In other words, the runs of size 2 (except the first one) are filled by 1, and the runs of size 1 are filled alternatively by 1 and 2. The sequence $X = (x_n)_{n \in \N}$ is the directed sequence of $U$.
\end{dfn}
In \cite{BJM23}, the authors showed that the frequencies of occurrence in $U$ cannot be equal to those of its directing sequence $X$. However, the authors did not prove the existence of the frequencies but only that they cannot be identical... if they exist. Since $U$ is directed by $T_1 = (12)^\omega$ and $T_2 = 1^\omega$, it directly follows from Theorem \ref{thm::thmMain} that:
\begin{cor}
	Let $U$ be the sequence directed by $T_1 = (12)^\omega$ and $T_2 = 1^\omega$. Then
	$$\lim_{n \to \infty} \dfrac{|U_0 \cdots U_{n-1}|_1}{n} = \dfrac{7-\sqrt{17}}{4} \text{ and }\lim_{n \to \infty} \dfrac{|X_0 \cdots X_{n-1}|_1}{n} = \dfrac{1+\sqrt{17}}{8}.$$
\end{cor}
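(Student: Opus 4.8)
The plan is to obtain the Corollary as a pure specialisation of Theorem~\ref{thm::thmMain}; the only step that is not mechanical is checking that the BJM sequence $U$ really is the sequence directed by the two periodic words $T_1 = (12)^\omega$ and $T_2 = 1^\omega$. First I would unwind the definition of $U$: the clause ``runs of size $2$ (except the first) are filled by $1$'' is literally the prescription $T_2 = 1^\omega$ read along the indices $i$ with $u_i = 2$, while the clause ``runs of size $1$ are filled alternately by $1$ and $2$'', encoded through the parity of $|u_0\cdots u_n|_1$, has to be shown to produce, along the increasing enumeration of the indices $i$ with $u_i = 1$, the purely periodic pattern $121212\cdots$, that is $T_1 = (12)^\omega$. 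This bookkeeping on the order in which the size-$1$ runs occur is the main (mild) obstacle; it is already asserted in the surrounding text, so it can be cited, but a self-contained argument would spell it out. I would also note that the normalisation $u_0 = u_1 = 2$ required by Section~2 holds for $U$ by definition.

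Granting this, the parameters of Theorem~\ref{thm::thmMain} are read off immediately: from $x_1 = 12$ one gets $p_1 = |x_1|_1/|x_1| = 1/2$ and $p_2 = 1 - p_1 = 1/2$; from $x_2 = 1$ one gets $q_2 = |x_2|_2/|x_2| = 0$ and $q_1 = 1 - q_2 = 1$. Then I would compute the discriminant $\Delta = (p_1 + 2q_2)^2 - 8(p_1 + q_2 - 1) = \tfrac14 + 4 = \tfrac{17}{4}$, hence $\sqrt{\Delta} = \tfrac{\sqrt{17}}{2}$, and substitute into the closed form:
\[
f_1 = \frac{(1-q_2)\bigl(p_1 + 2q_2 + \sqrt{\Delta}\bigr)}{2 + \sqrt{\Delta} - p_1} = \frac{\tfrac12 + \tfrac{\sqrt{17}}{2}}{\tfrac32 + \tfrac{\sqrt{17}}{2}} = \frac{1+\sqrt{17}}{3+\sqrt{17}} = \frac{7 - \sqrt{17}}{4},
\]
the last equality by rationalising the denominator ($9 - 17 = -8$). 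This is the asserted frequency of $1$ in $U$.

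For the directing sequence $X$ I would use that its frequency of $1$ is the weighted average of the frequency of $1$ in $T_1$ over the positions where $u$ equals $1$ (density $f_1$, local frequency $p_1 = \tfrac12$) and of the frequency of $1$ in $T_2$ over the positions where $u$ equals $2$ (density $1 - f_1$, local frequency $q_1 = 1$), namely
\[
\lim_{n\to\infty}\frac{|X_0\cdots X_{n-1}|_1}{n} = p_1 f_1 + q_1(1 - f_1) = 1 - \tfrac12 f_1 = 1 - \frac{7 - \sqrt{17}}{8} = \frac{1 + \sqrt{17}}{8}.
\]
A final check that all hypotheses of Theorem~\ref{thm::thmMain} are met (both directing words periodic over $\{1,2\}$, $x_1,x_2\in\{1,2\}^*$) then completes the argument. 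I do not expect any difficulty beyond the identification of $U$ with the sequence directed by $T_1$ and $T_2$ mentioned at the start.
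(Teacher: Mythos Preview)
Your proposal is correct and follows exactly the paper's approach: the paper's own proof is the single sentence ``In that present case, $p_1 = 0.5$ and $q_2=0$'', and you have merely carried out the substitution into Theorem~\ref{thm::thmMain} in full detail (including the identification of $U$ with the BJM sequence, which the paper states in the text preceding the corollary). One point worth flagging: for the frequency in the directing sequence you use $p_1 f_1 + q_1(1-f_1)$, whereas the theorem as printed reads $p_1 f_1 + p_2(1-f_1)$; your version is the right one---with $p_1=p_2=\tfrac12$ the printed formula would give $\tfrac12$, not $(1+\sqrt{17})/8$---so the theorem statement evidently contains a typo that your weighted-average reasoning silently corrects.
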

\begin{proof}
	In that present case, $p_1 = 0.5$ and $q_2=0$.
\end{proof}

\section{Conclusion and perspectives}
In the present work, we have shown that self-descriptive sequences directed by two periodic sequences have frequencies. We have also given an explicit expression for these frequencies.

In future work, it will be interesting to extend this result to non-periodic sequences. For example, Sturmian words, namely the aperiodic sequences with the least number of finite factors, are good candidates for directing sequences.


\bibliographystyle{eptcs}
\bibliography{mybiblio}
\end{document}